\newcommand*{\rom}[1]{\expandafter\@slowromancap\romannumeral #1@}
\newtheorem{theorem}{Theorem}
\def \v x{\bm x}
\def \bs{\bm s}
\def \v x{\bm X}
\def \bSigma{\bm \Sigma}
\renewcommand{\v}[1]{\ensuremath{\boldsymbol{#1}}}
\DeclareMathOperator*{\tsum}{\textstyle\sum}
\newcommand{\recht}[1]{\operatorname{#1}}
\title{
Topology-dependent privacy bound for decentralized federated learning
}
\name{Qiongxiu Li$^{1}$, Wenrui Yu$^{2}$, Changlong Ji$^{3}$, Richard Heusdens$^{2,4}$ }
\address{$^1$Tsinghua University, China, qiongxiuli@mail.tsinghua.edu.cn\\
$^{2}$Delft University of Technology, the Netherlands, w.yu-6@student.tudelft.nl\\
$^3$Telecom SudParis, Institut Polytechnique de Paris,  France, changlong.ji@telecom-sudparis.eu\\
$^{4}$Netherlands Defence Academy, the Netherlands, r.heusdens@tudelft.nl
}
\begin{document}
\ninept
\maketitle
\raggedbottom
%\begin{sloppy}
\addtolength{\abovedisplayskip}{-1.0mm}
\addtolength{\belowdisplayskip}{-1.0mm}
\begin{abstract}
Decentralized Federated Learning (FL) has attracted significant attention due to its enhanced robustness and scalability compared to its centralized counterpart. It pivots on peer-to-peer communication rather than depending on a central server for model aggregation. While prior research has delved into various factors of decentralized FL such as aggregation methods and privacy-preserving techniques, one crucial aspect affecting privacy is relatively unexplored: the underlying graph topology. In this paper, we fill the gap by deriving a stringent privacy bound for decentralized FL under the condition that the accuracy is not compromised, highlighting the pivotal role of graph topology. Specifically, we demonstrate that the minimum privacy loss at each model aggregation step is dependent on the size of what we term as 'honest components', the maximally connected subgraphs once all untrustworthy participants are excluded from the networks, which is closely tied to network robustness.   Our analysis suggests that attack-resilient networks will provide a superior privacy guarantee. We further validate this by studying both Poisson and power law networks, showing that the latter, being less robust against attacks, indeed reveals more privacy.
In addition to a theoretical analysis, we consolidate our findings by examining two distinct privacy attacks: membership inference and gradient inversion. 

%we also substantiate our results through two types of privacy attacks including a membership inference attack and a gradient inversion attack. 

%More specifically, we show that the minimum privacy loss at each iteration is the sum of the local gradients of so-called honest components, which are the maximally connected subgraphs after removing all the non-trustworthy participants in the networks. This result motivates us to further investigate what kind of graph topology is more privacy-friendly for decentralized FL. We consider two well-known graph distributions including Poisson and Power law networks and find that, indeed, the graphs that are more resilient to attacks have more privacy advantages. 

\end{abstract}
\begin{keywords}
Privacy, graph topology, peer-to-peer, decentralization, federated learning
\end{keywords}

\section{Introduction}
Federated Learning (FL) performs collaborative training between multiple participants/nodes/clients without directly sharing each node's raw data \cite{mcmahan2017communication}. It can be implemented in either a centralized/star topology or a decentralized topology. The centralized topology, which stands as the predominant topology in FL, employs a central server that interacts directly with each and every node. However, in real-world scenarios, maintaining such a centralized server can be challenging due to its high communication bandwidth demands and the requisite trust from all involved participants. Furthermore, centralized topologies possess an inherent vulnerability–a single point of failure-making them vulnerable to attacks aiming to bring down the entire network. As a remedy,  decentralized FL offers an alternative by substituting the server with distributed processing protocols which require information exchange between (locally) connected nodes only. Examples of these protocols are the empirical methods where the data aggregation is done using average consensus techniques such as gossiping SGD \cite{jin2016scale}, D-PSGD \cite{lian2017can} and variations thereof \cite{tang2018d,hu2019decentralized,jane2022gmm}.

Though FL does not require direct sharing of individual participants' private data, it is shown susceptible to privacy breaches.  The exchanged information, such as gradients or weights, can inadvertently lead to potential data leakages.  Most of the existing work focuses on the centralized FL, and examples of attacks include the membership inference attack\cite{shokri2017membership,li2023effective} and the gradient inversion attack \cite{zhu2019deep,geiping2020inverting,yin2021see,geng2023improved,yang2022using}. The goal of the membership inference attack is to determine whether a particular data point was used for training the target model (being a member) or not (being a non-member). It has been recently shown in \cite{li2023effective}  that membership information can be leaked through gradients by exploiting the so-called gradient orthogonality in data instances in overparameterized neural networks. The gradient inversion attack employs an iterative method to find fake data that produces a gradient similar to the real gradient generated by the private data. 
Such attacks are based on the assumption that data samples, when producing analogous gradients, are likely to be congruent. 
As a consequence, many approaches attempt to protect privacy by protecting the local gradients held by each node from being revealed to others.

Many studies claim that mere decentralization, i.e., deploying distributed processing protocols, would enhance the users' privacy. This is because no single user is as powerful as the centralized server, seemingly reducing privacy concerns \cite{cheng2019towards,vogels2021relaysum}. Yet, such an argument is challenged in a recent study \cite{pasquini2022privacy} which shows that untrustworthy users can influence other users' updates,  becoming as powerful as the central server in centralized FL. Consequently, it highlights the necessity of integrating cryptographic techniques, such as differential privacy (DP) \cite{dwork2006} and secure aggregation (SA)\cite{bonawitz2017practical},  to further enhance the privacy of decentralized FL.   DP methods, including the LEASGD approach \cite{cheng2019towards} and ADMM-based approaches \cite{zhang2016dynamic,zhang2018improving,zhang2018recycled,huang2019dp,zhang2022privacy}, pose an inherent trade-off between accuracy and privacy as employing DP will inevitably lead to a compromise in accuracy.  The SA approaches, on the other hand, do not deteriorate the accuracy but require high communication overhead. 

Decentralized FL's privacy has been analyzed via various aspects including different types of distributed tools, potential threats from dishonest users, and privacy-enhancing methods. However, an often overlooked yet critical factor is the underlying graph topology. Although \cite{pasquini2022privacy} provides some insights indicating that a denser graph  can reduce privacy risks, it's not just the density, but also the degree distribution that is pivotal. For instance, two graphs with identical density can exhibit vastly different privacy implications (as we will show later). We believe that a thorough investigation of topology is crucial in shaping decentralized FL's privacy, requiring an in-depth exploration.  In this paper, we bridge this gap by analytically establishing a privacy bound on decentralized FL, highlighting the tight connection between graph topology and privacy. This newly derived bound is especially significant for real-world applications as it provides guidance on which topological structures intrinsically enhance privacy. Our main results are summarized as follows:

% In past studies, decentralized FL was usually defaulted to have better privacy advantages than centralized FL due to the absence of the central server which is connected to all nodes \cite{cheng2019towards,vogels2021relaysum}. 
%Recently \cite{pasquini2022privacy} proposed the opposite view. They claim that decentralized FL cannot hold more privacy advantages than centralized FL since those attacks launched in centralized cases are also applicable to decentralized systems. At the same time, the unequal contributions determined by decentralized topology lead to the phenomenon of "local generalization", which increases the leakage of local information. A network with higher density can mitigate such privacy risks caused by local generalization, but it also means those malicious nodes have more chances to collect the global state of the system.

\begin{enumerate}
    \item For decentralized FL that guarantees the output accuracy of model aggregation is uncompromised,  we derive a bound on the privacy loss, highlighting  that privacy leakage is profoundly influenced by the underlying graph topology, or more precisely, the network robustness against attacks.
    \item Our findings suggest that robust network topologies, which sustain connectivity under adversarial conditions, inherently offer better privacy protection. We confirm this by investigating two prevalent  topologies: Poisson and power law networks, with the latter being less privacy-preserving. Our results are substantiated through two distinct privacy attack assessments: membership inference and gradient inversion.
\end{enumerate}

\section{Preliminaries}
This section reviews the necessary fundamentals for the paper. 
\subsection{Centralized federated learning}
FL generally considers a centralized setting assuming there is a centralized server connected to a number of users/clients/nodes.  
Assuming there are $n$ clients and denote $\v w^{(t)}$ as the model weights at iteration $t$. A typical FL protocol works as follows: 
\begin{enumerate}
    \item Initialization: at iteration  $t=0$, the central server randomly initializes the weights $\v w^{(0)}$ of the global model.
    \item Local model training: at each iteration $t$, each user $i$ first receives the model updates from the server and then calculates its local gradient, denoted as $\v g_i^{(t)}$,  based on one mini-batch sampled from its local dataset.
    \item Model aggregation: the server gathers local gradients from users and aggregates them to obtain an updated global model. The aggregation is often done by weighted averaging and  typically uniform weights are applied, i.e., 
\begin{align}\label{eq.gave}  
\v g_{\rm{ave}}^{(t)}=\frac{1}{n} \sum_{i=1}^{n} \v g_i^{(t)} 
\end{align}
After obtaining the aggregated gradient, each node $i$ then updates its own model weight  by $\v  w_i^{(t+1)}=\v w_i^{(t)}-\eta \v g_{\rm{ave}}^{(t)}$, where $\eta$ is a constant. 
The last two steps  are repeated  until the global model converges or a certain stopping criterion is met. 
\end{enumerate}

\subsection{Decentralized federated learning}
Decentralized FL works for cases where a trusted centralized server is not available. Many decentralized FL protocols work by deploying distributed average consensus algorithms to compute the average of local gradients, i.e., computing \eqref{eq.gave} without any centralized coordination. Examples are gossip  \cite{dimakis2010gossip}, linear iterations \cite{olshevsky2009convergence}, and convex optimization-based methods such as the ADMM \cite{boyd2011distributed} and the PDMM \cite{zhang2018distributed}, which all rely on peer-to-peer communication over distributed networks.  A distributed network is often modelled as an undirect graph $\mathcal{G}=(\mathcal{V},\mathcal{E})$ where $\mathcal{V}={\{1,2,...,n}\}$ and $\mathcal{E}\subseteq \mathcal{V}\times \mathcal{V}$ denote the set of nodes and edges, respectively.  
% $\mathbf{N}_i=\{j| {(i,j)\in \mathcal{E}, j\neq i\}}$ denotes the set of neighboring nodes of node $i$.  
Note that node $i$ can only communicate with (neighboring) node $j$ if $(i,j)\in \mathcal{E}$.

We now take the linear iteration algorithm \cite{olshevsky2009convergence} as an example to explain how to conduct peer-to-peer aggregation when computing \eqref{eq.gave}.  Let $\v x^{(0)}$ be the so-called state vector of the network which is initialized with local gradients, i.e., 
\begin{align}\label{eq.linearIni}
\v x^{(0)} =\v g,    
\end{align}
where $\v g=(\v g^\top_1,\v g^\top_2,\ldots, \v g^\top_n)^\top$ \footnote{To avoid confusion, the superscript $^{(t)}$ is omitted here.}  is the vector of stacked local gradients of all nodes. The average gradient can be obtained by applying, at every step $r = 1,2,\ldots,r_{\max}$, a linear transformation $\v A\in \mathcal{A}$ where 
\begin{align}
    \mathcal{A}=\left\{\v A \in \mathbb{R}^{n \times n} \,|\, \v A_{i j}=0 \text { if }(i,j) \notin \mathcal{E} \text { and } i \neq j\right\},
\end{align}
such that the state vector $\v x$ is updated as
\begin{align}
 \v x^{(r+1)}=\v A \v x^{(r)},   
\end{align}
and  the optimum solution is $\forall i\in \mathcal{V}:~\v x_i^*=\v g_{\rm{ave}}$. The structure of $\v A$ reflects the connectivity of the network\footnote{For simplicity, we assume that $\v A$ is constant for all iterations, representing a synchronous execution of the algorithm. For asynchronous implementation, the transformation depends on which node will update. The results presented here can be readily generalized to asynchronous cases by considering expected values. }.
In order to correctly compute the average, that is, $\v x^{(r)}\rightarrow  n^{-1}\bm 1 \bm 1^{\top}\v g$ as $r\rightarrow\infty$ where $\bm 1 \in \mathbb{R}^{n}$ denote the vector of all ones, necessary and sufficient conditions for $\v A$ are (i) $\mathbf{1}^{\top}\v A=\mathbf{1}^{\top}$, (ii)  $\v A \mathbf{1}=\mathbf{1}$, (iii) $\alpha\left(\v A-\frac{\bm1\bm1^{\top}}{n}\right)<1$,
where $\alpha(\cdot)$ denotes the spectral radius \cite{olshevsky2009convergence}.

Hence, decentralized FL often shares the same updating procedure as the centralized case except at each model aggregation step for computing \eqref{eq.gave}, where a peer-to-peer distributed algorithm is applied to average the local gradients.  
\subsection{Adversary model}
We consider a commonly utilized passive (also known as honest-but-curious)  adversary model for distributed systems.  The passive adversary works by colluding a number of nodes in the network, termed as corrupt nodes. These nodes  can share information together to enhance the chance of inferring the private information of the remaining nodes, referred to as honest nodes.

% \subsection{Performances and evaluation metrics}
% The performances of privacy-preserving distributed optimization approaches are evaluated with the following two metrics. 
% \subsubsection{Output accuracy}
% The output accuracy should reflect how close the optimization result of the privacy-preserving algorithms are compared to the original non-privacy-preserving ones. Here, we use the squared error defined as $\|\v x_i^{(t_{\recht{max}})}-\v x^{*}\|_2^{2}$ to quantify the accuracy, 
% where $t_{\recht{max}}$ denotes the maximum number of iterations. 

% \subsubsection{Individual privacy}

% As for the privacy metric, we adopt mutual information for quantifying information leakage about the private data $\bs_i$ of the individual honest nodes. To do so, assume that $\bs_i$ is a realization of a random variable $S_i$.
% The privacy leakage is then measured by
% \begin{align}
%    \recht{I}(S_i; {\mathcal{V}}_{\recht{adv}})=\recht{h}(S_i)-\recht{h}(S_i|{\mathcal{V}}_{\recht{adv}}),
% \end{align}
% where ${\mathcal{V}}_{\recht{adv}}$ denotes the set of random variables  available to the adversary. $\recht{I}(\cdot;\cdot)$ and $\recht{h}(\cdot)$ denote mutual information and differential entropy, respectively \cite{cover2012elements}. 

\section{Privacy bound and graph topology}
The privacy leakage depends on many factors, for example the type of learning protocols,  the deployed privacy-preserving techniques, and the amount of corrupt nodes. However, another pivotal factor that has been overlooked is the underlying \emph{graph topology}.  As we shall see,  the privacy of honest nodes depends on the graph topology, or more precisely, on the sizes of the honest components  (maximally connected subgraphs formed by honest nodes) it belongs to after removing all corrupt nodes.  
\subsection{Minimum information loss} \label{sec:honest}
Let $\mathcal{V}_h$ and $\mathcal{V}_{\textrm{c}}=\mathcal{V}\setminus \mathcal{V}_h$ denote the set of honest and corrupt nodes in the network, respectively. 
\begin{theorem}\label{theo:df}
Let $\mathcal{G}_h=({\mathcal{V}}_h,\mathcal{E}_h)$ be the subgraph of $\mathcal{G}$ after all corrupt nodes are eliminated. 
Let $\mathcal{G}_{ h,1},\ldots,\mathcal{G}_{ h,k_h}$ denote the components of $\mathcal{G}_h$ and let ${\mathcal{V}}_{ h,k}$ be the vertex set of $\mathcal{G}_{ h,k}$.
Given any protocol that outputs $f(s_1,s_2,\ldots,s_n)=\tsum_{i=1}^n s_i$ to every node, after the protocol,  the  partial sums
\begin{align} \label{eq.psum}
    \{ \tsum_{i\in {\mathcal{V}}_{ h,k}}s_i\}_{ k=1,2,\ldots,k_h},
\end{align}
will  be revealed to the adversary.
\end{theorem}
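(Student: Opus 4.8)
The plan is to argue that the adversary, by controlling all nodes in $\mathcal{V}_c$, can extract exactly the partial sums in \eqref{eq.psum} from the information it legitimately sees, namely its own private inputs $\{s_i : i \in \mathcal{V}_c\}$ together with the final output $f(s_1,\ldots,s_n) = \tsum_{i=1}^n s_i$. The key structural observation is that deleting $\mathcal{V}_c$ from $\mathcal{G}$ disconnects the honest part into the components $\mathcal{G}_{h,1},\ldots,\mathcal{G}_{h,k_h}$, so that there is \emph{no} edge joining $\mathcal{V}_{h,k}$ to $\mathcal{V}_{h,\ell}$ for $k \neq \ell$; every path between two distinct honest components must pass through a corrupt node. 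I would formalize this as the statement that, conditioned on the corrupt nodes' views, the only coupling between the honest components is through the single global linear constraint imposed by the output.

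First I would set up the adversary's knowledge precisely: after the protocol the adversary knows $\sigma := \tsum_{i=1}^n s_i$ and the values $\{s_j\}_{j \in \mathcal{V}_c}$, hence also the ``honest total'' $T := \sigma - \tsum_{j \in \mathcal{V}_c} s_j = \tsum_{i \in \mathcal{V}_h} s_i = \tsum_{k=1}^{k_h} \big(\tsum_{i \in \mathcal{V}_{h,k}} s_i\big)$. The claim is that each individual term $S_k := \tsum_{i \in \mathcal{V}_{h,k}} s_i$ is revealed, not merely their sum $T$. For the protocol instantiated as the linear iteration of Section~2.2 (or any average-consensus scheme), I would look at the state updates restricted to an honest component: because $\v A_{ij} = 0$ whenever $(i,j) \notin \mathcal{E}$, the messages that nodes of $\mathcal{V}_{h,k}$ exchange among themselves, together with the messages they receive from their corrupt neighbours, are all the adversary needs to simulate the internal evolution of that component — and since the consensus output forces every node in $\mathcal{V}_{h,k}$ to converge to the same global average, back-substituting the (known) corrupt inputs and cross-component messages lets the adversary solve for $S_k$ exactly. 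A cleaner, protocol-agnostic route is a simulation/indistinguishability argument: fix the corrupt inputs and the true output; then any two honest-input assignments that the adversary cannot distinguish must induce the same transcript, and one shows that the transcript determines each $S_k$ because altering $S_k$ while compensating in $S_\ell$ (to keep $T$ fixed) necessarily changes the messages crossing the corrupt cut adjacent to component $k$.

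Concretely I would proceed in three steps. \textbf{Step 1 (cut structure).} Prove that $\mathcal{V}_{h,1},\ldots,\mathcal{V}_{h,k_h}$ are pairwise non-adjacent in $\mathcal{G}$, so the neighbourhood of $\mathcal{V}_{h,k}$ inside $\mathcal{V}\setminus\mathcal{V}_{h,k}$ lies entirely in $\mathcal{V}_c$; hence the ``interface'' of component $k$ with the rest of the world is a set of corrupt nodes the adversary controls. \textbf{Step 2 (local reconstruction).} Show that from the corrupt nodes' transcript one can recover, for each $k$, the quantity $S_k$: either (a) by direct algebra on the consensus recursion confined to $\mathcal{V}_{h,k}\cup N(\mathcal{V}_{h,k})$, using conservation of the running sum, or (b) by the simulation argument that a change in $S_k$ alone is detectable at the corrupt interface of component $k$. \textbf{Step 3 (assembling).} Conclude that all $k_h$ partial sums are revealed; optionally note the converse flavour — within a component nothing finer than $S_k$ need leak, which is what makes the bound tight and ties it to the component sizes $|\mathcal{V}_{h,k}|$. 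The main obstacle I anticipate is Step 2 in full generality: phrasing ``the protocol outputs the sum'' strongly enough that it licenses the back-substitution without assuming a specific scheme. I would handle this by invoking the standard secure-computation fact that a sum functionality over a graph whose honest part is disconnected by the corrupt set cannot hide the component-wise sums — equivalently, the residual entropy of the honest inputs given the corrupt view factors over the components modulo the single constraint $\tsum_k S_k = T$ — and then remark that the linear-iteration protocol of Section~2.2 is a concrete witness realizing exactly this leakage.
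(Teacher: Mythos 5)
Your Step 1 and the observation that the adversary already knows the honest total $T=\tsum_{i\in\mathcal{V}_h}s_i$ match the setup of the paper's argument, but the heart of the theorem --- that each component sum $S_k=\tsum_{i\in\mathcal{V}_{h,k}}s_i$ is revealed individually, for an \emph{arbitrary} correct protocol --- is exactly where your proposal has a gap. Route (a) (back-substitution in the consensus recursion) only treats the specific linear-iteration scheme of Section~2.2, whereas the theorem quantifies over any protocol that outputs the sum, so it cannot carry the proof. Route (b) asserts the key claim, namely that ``altering $S_k$ while compensating in $S_\ell$ necessarily changes the messages crossing the corrupt cut adjacent to component $k$,'' without deriving it; and your declared fallback --- invoking ``the standard secure-computation fact that a sum functionality over a graph whose honest part is disconnected by the corrupt set cannot hide the component-wise sums'' --- is circular, since that fact \emph{is} the statement of Theorem~\ref{theo:df}. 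A further technical slip: ``assignments the adversary cannot distinguish must induce the same transcript'' is not correct for randomized protocols; one has to reason about the existence of consistent executions (inputs \emph{and} randomness), not about identical transcripts.

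The paper closes precisely this gap with an explicit emulation argument (stated for $k_h=2$ with components $\mathcal{L}$ and $\mathcal{R}$): the adversary re-executes the protocol while fabricating inputs $\tilde{\v{s}}_\mathcal{R}$ and randomness $\tilde{\v{r}}_\mathcal{R}$ for the nodes in $\mathcal{R}$ and replaying the genuine messages $\v{m}_\mathcal{L}$, which is legitimate exactly because of your Step 1: $\mathcal{L}$ and $\mathcal{R}$ share no edge, so $\mathcal{L}$'s behaviour is unaffected by the substitution. Since the true $(\v{s}_\mathcal{R},\v{r}_\mathcal{R})$ constitute one consistent choice, the search terminates with some $(\tilde{\v{s}}_\mathcal{R},\tilde{\v{r}}_\mathcal{R})$ reproducing the observed view; this yields a valid execution on inputs $(\v{s}_\mathcal{L},\v{s}_{\mathcal{V}_c},\tilde{\v{s}}_\mathcal{R})$ whose output coincides with the real one, and correctness of the protocol then forces $f(\v{s}_\mathcal{L},\v{s}_{\mathcal{V}_c},\v{s}_\mathcal{R})=f(\v{s}_\mathcal{L},\v{s}_{\mathcal{V}_c},\tilde{\v{s}}_\mathcal{R})$. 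From this the adversary computes $\tsum_{i\in\mathcal{L}}s_i$ as the output minus $\tsum_{i\in\mathcal{V}_c}s_i$ minus $\tsum_{i\in\mathcal{R}}\tilde{s}_i$, and then $\tsum_{i\in\mathcal{R}}s_i$ by subtraction from $T$; the case $k_h>2$ is handled component by component. Your splicing intuition is the right mechanism, but it must be carried out in this way (emulation plus correctness of the output), not imported as a known fact.
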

\begin{proof}
We use the case  $k_h=2$ to explain the main idea, i.e., the honest nodes are disconnected into two components after removing the corrupt nodes. Denote $\mathcal{L}$ and $\mathcal{R}$ as the node set of two honest components and let $\v{s}_\mathcal{L}$, $\v{s}_\mathcal{R}$ and $\v{s}_{{\mathcal{V}}_c}$ be vectors consisting of the inputs for two honest components, and the corrupt nodes, respectively.
Denote $\operatorname{F}$ as the protocol outputting $\sum_{i=1}^n s_i$. The adversary has the following view
  \begin{align}\label{eq:view_sum_proof}
\{\v{s}_{{\mathcal{V}}_c},\v{r}_{{\mathcal{V}}_c},\v{m}_\mathcal{L},\v{m}_\mathcal{R}, f(\v{s}_\mathcal{L},\v{s}_{{\mathcal{V}}_c},\v{s}_\mathcal{R})\},
  \end{align}
where $\v{r}_{{\mathcal{V}}_c}$ is a vector containing the so-called randomness from the corrupt nodes, $\v{m}_\mathcal{L}$ is a vector containing all messages received from nodes in $\mathcal{L}$ and similarly for $\v{m}_\mathcal{R}$  in $\mathcal{R}$. 

The adversary can emulate a protocol execution by determining the actions for the nodes in $\mathcal{R}$. This involves selecting certain inputs and randomness and then proceeding with the steps in $\operatorname{F}$, using the messages $\v{m}_\mathcal{L}$ as needed in $\operatorname{F}$. If there is a discrepancy between an entry in $\v{m}_\mathcal{R}$ and the view, the protocol will be terminated and restarted. Since $\v{s}_\mathcal{R}$ and $\v{r}_\mathcal{R}$ are valid choices (though there may be several others), the adversary will eventually succeed, implying that it will identify $\tilde{\v{s}}_\mathcal{R}$ and $\tilde{\v{r}}_\mathcal{R}$ that provide the exact information view as in equation (\ref{eq:view_sum_proof}). Given that the adversary utilizes the identical messages from the nodes in $\mathcal{L}$ as in the actual execution of $\operatorname{F}$, and the accurate output is included in the view, we must have  $f(\v{s}_\mathcal{L},\v{s}_{{\mathcal{V}}_c},\v{s}_\mathcal{R})=f(\v{s}_\mathcal{L},\v{s}_{{\mathcal{V}}_c},\tilde{\v{s}}_\mathcal{R})$ and thus the adversary can determine
  \begin{align*}
    \tsum_{i\in \mathcal{L}} s_i = f(\v{s}_\mathcal{L},\v{s}_{{\mathcal{V}}_c},\v{s}_\mathcal{R})-\tsum_{i \in {\mathcal{V}}_c} s_i -\tsum_{i \in R}\tilde{s}_i,
  \end{align*}
 after knowing $\tsum_{i \in \mathcal{L}}s_i$  the adversary can also determine $\tsum_{i \in \mathcal{R}}s_i=\tsum_{i=1}^n s_i-\tsum_{i \in {\mathcal{V}}_c}s_i-\tsum_{i \in \mathcal{L}}s_i$.  The above proof can easily be extended to the case where $k_h>2$ and in such cases the adversary will learn all the sums of the private data in each component. Hence, the proof is now complete.
\end{proof}
% \label{pf.smpc}

% \begin{proof}
% See Appendix~\ref{pf.smpc}.
% \end{proof}
% %\jane{need more inspection}

\subsection{Lower bound of privacy}
To develop an understanding on the minimum information loss,  we use an information-theoretical measure, e.g., mutual information to quantify the privacy leakage. Let $\recht{I}(\cdot;\cdot)$ and $\recht{h}(\cdot)$ denote mutual information and differential entropy, respectively \cite{cover2012elements}.   Suppose that the $\bs_i$s are realizations of an i.i.d.\ random vector $\{S_1,\ldots, S_n\}$, and denote the variable $\bSigma_k = \sum_{i \in {\mathcal{V}}_{ h,k}} \!\!S_i$. Consider an honest node $i\in \mathcal{V}_h$, given the knowledge of \eqref{eq.psum} the adversary will learn the following information about its private data $s_i$:
\begin{align}\label{eq.bound}
\recht{I}(S_i;\bSigma_1,\bSigma_2,\ldots,\bSigma_{k_h}).
\end{align}
Note that the above bound is consistent with the results presented in \cite{kreitz2010practical,beimel2007private,li2019privacyA,Jane2020TSP,Jane2020TIFS}, where it was shown that the partial sums of honest components are revealed when computing average/sum. Hence, the derived lower bound is tight.

Denote $m_{k}=|{\mathcal{V}}_{ h,k}|, k\in \{1,2,\ldots,k_h\}$ as the number of nodes within the respective honest components.
To see how these sizes affect the privacy leakage, denote the variance of $S_i$s as ${\rm var}(S_i) = \sigma^2$. For sufficiently large $m_k$, the variable $\bSigma_k $ is approximately Gaussian distributed with  variance ${\rm var}(\bSigma_k) =m_k\sigma^2$;  Consider an honest node $i\in \mathcal{V}_h$,  let $\mathcal{G}_{h,1}$ denote the component that node $i$ belongs to. Thus, \eqref{eq.bound} becomes
\begin{align}
\recht{I}(S_i;\bSigma_1)&\stackrel{(a)}{=} \recht{h}(\bSigma_1) - \recht{h}(\bSigma_1|S_i)\nonumber\\
& \stackrel{(b)}{\approx} \frac{1}{2}\left(\log(2\pi\textrm{e}m_1\sigma^2)-\log(2\pi\textrm{e}(m_1-1)\sigma^2)\right) \nonumber\\
& = \frac{1}{2}\log\left(\frac{m_1}{m_1-1}\right)\nonumber\\
&\approx \frac{1}{2(m_1-1)}  \label{eq.miSigma1},
\end{align}
where \eqref{eq.bound} equals $\recht{I}(S_i;\bSigma_1)$ uses the fact that $S_i$ is independent of all $\bSigma_k$s except of $\bSigma_1$; (a) uses the definition of mutual information\footnote{Shannon entropy $H(\cdot)$ can be used for discrete random variables.}; (b) assumes that $m_1$ is sufficiently large and uses the fact that the differential entropy of a Gaussian distribution with variance $\sigma^2$ is given by $\frac{1}{2} \log (2 \pi \textrm{e} \sigma^2 )$.

Overall, we conclude that the privacy loss of an honest node is approximately inversely proportional to the number of nodes in the honest component it belongs to. That means, \emph{the bigger the size of the honest component is, the less the privacy loss is}.  This is verified in Fig.~\ref{fig:mi} where we depict the mutual information  $\recht{I}(S_i;\bSigma_1)$ as a function of the honest component size $m_1$ considering two distributions of $S_i$s: Gaussian distributed with zero mean and unit variance, and uniformly distributed over the interval $[0,1]$. For each experiment, we conducted 5000 Monto Carlo runs  and adopted  the non-parametric entropy estimation toolbox \cite{ver2000non} to estimate the mutual information.  %Hence, the larger the component size is, the smaller the information loss is.
\begin{figure}[t]
\centering
\includegraphics[width=.35\textwidth]{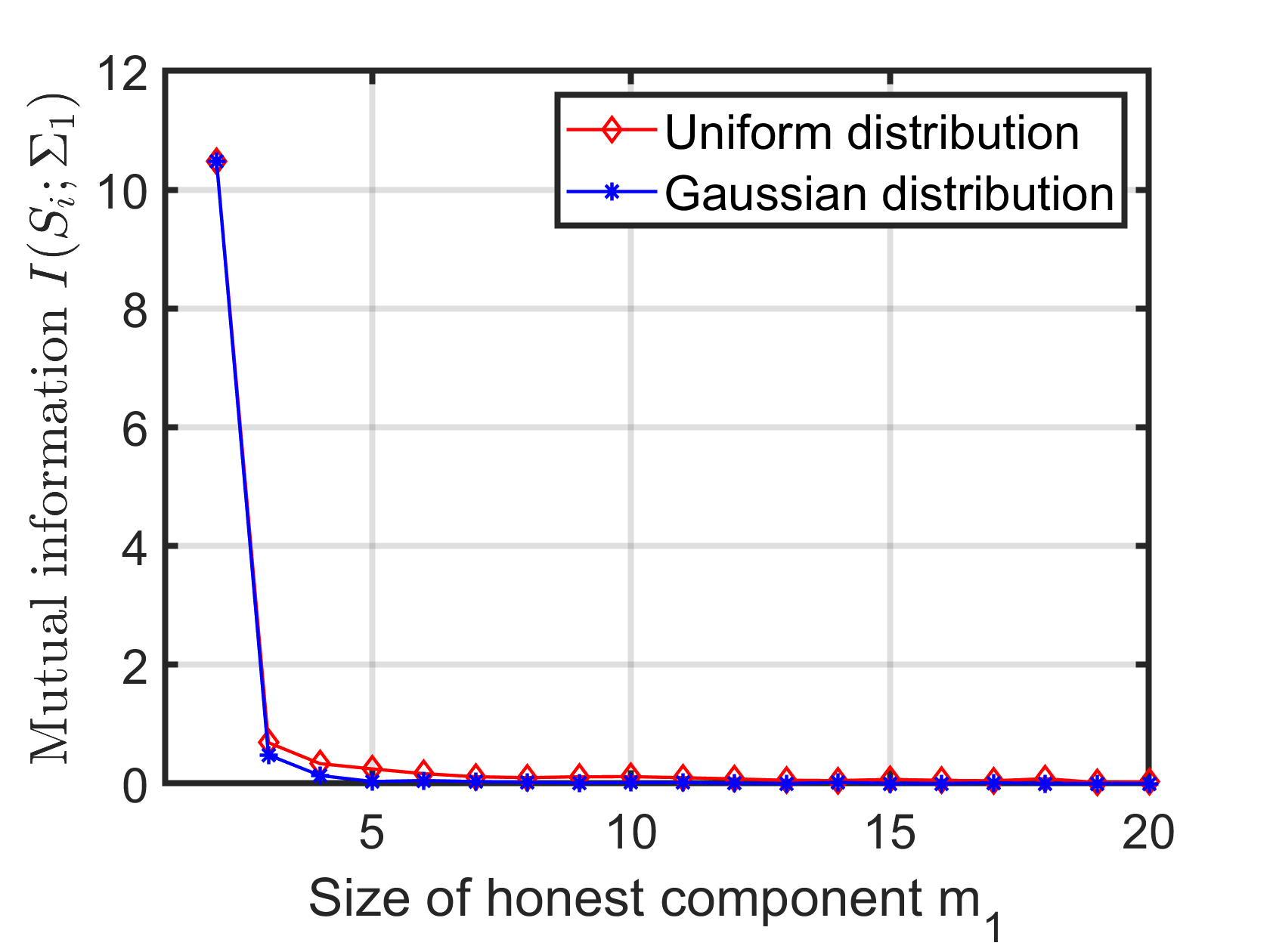}
 \centering
\caption{Mutual information $\recht{I}(S_i;\bSigma_1)$ of as a function of the honest component size $m_1$ for Gaussian and uniform distributed private data $S_i$.}
\label{fig:mi}
\vskip -10pt
\end{figure}

\subsection{Graph topology and network robustness}
In Theorem \ref{theo:df} and the subsequent result detailed in \eqref{eq.miSigma1}, we demonstrated that privacy loss is dependent on the size of honest components.  This concept closely aligns with graph robustness or resilience when considering an adversary colluding a number of nodes in the network. The topology of a graph plays a pivotal role in understanding the robustness of networks\cite{magnien2011impact}. One fundamental aspect of graph topology is the degree distribution, which characterizes the number of connections each node in the network possesses. The literature  predominantly investigates two key types of degree distribution: Poisson and power law distributions.
\begin{figure}[t]
\centering
\includegraphics[width=.30\textwidth]{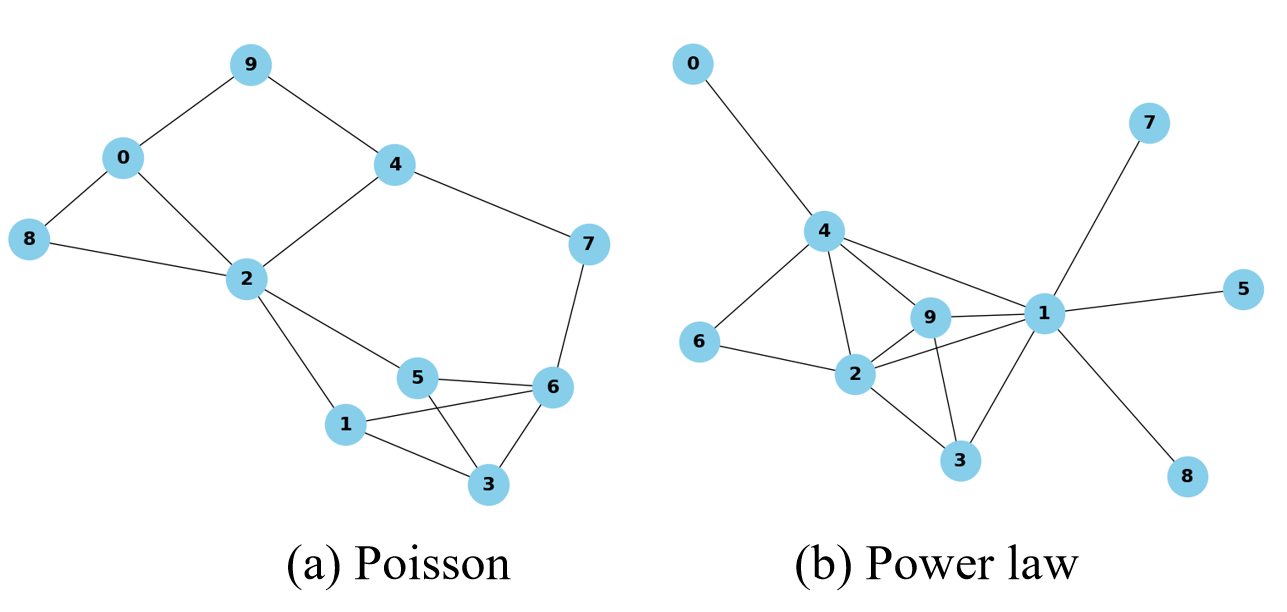}
 \centering
\caption{Sample network with  $n=10$ nodes of (a) Poisson and (b) power law distribution.}
\label{fig:example}
\vskip -10pt
\end{figure}

Many networks, such as those resembling random structures, follow a Poisson distribution [Fig.~\ref{fig:example} (a)], i.e., the degree of each node follows a Poisson distribution with a specific average degree. These networks exhibit a relatively uniform degree distribution, with most nodes having degrees close to the average. Conversely, many real-world networks, like the Internet and social or biological networks, exhibit power law degree distributions. In these power law graphs, only a few nodes have extremely high degrees, whereas most have significantly fewer connections  [Fig.~\ref{fig:example} (b)]. %This scale-free behavior is a prominent feature of power law graphs. 
Assuming the adversary is aware of the graph's connectivity, it will target and corrupt the pivotal nodes, i.e., those with high degrees. Given that power law networks rely on a few highly connected nodes to ensure network connectivity, they are notably 
 more vulnerable to such targeted attacks compared to Poisson networks \cite{magnien2011impact}. Such increased vulnerability can lead to greater privacy risks, as we'll discuss in the following section. 

\section{Experimental validations}
We now proceed to consolidate our theoretical results via numerical validations. Recall that peer-to-peer model aggregation \eqref{eq.gave} is applied at all iterations, given the fact that  the revealed information at each model aggregation step is the sum of local gradients of honest components (Theorem \ref{theo:df}), we conclude that throughout the whole process the adversary can collect the following information:
\begin{align} \label{eq.gsum}
    \{\tsum_{i\in {\mathcal{V}}_{ h,k}}\v g_i^{(t)}\}_{t\geq 0, k=1,2,\ldots,k_h}.
\end{align}
To quantitatively evaluate the privacy leakage caused by the above gradient information, we deploy two widely-used attacks including membership inference and gradient inversion.  As we shall see, the privacy loss depends on the graph topology, or more precisely, the size of honest components after removal of corrupted nodes. 
\begin{figure}[t]
\centering
\includegraphics[width=.30\textwidth]{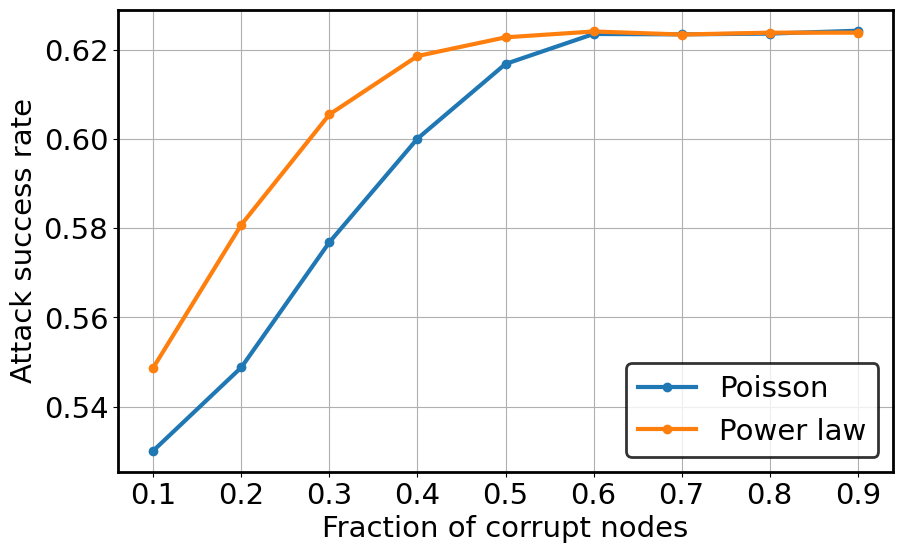}
 \centering
\caption{Membership privacy leakage of the honest nodes, i.e., overall attack success rate,  as a function of the fraction of corrupt nodes for both Poisson and power law networks.}
\label{fig:mia}
\vskip -10pt
\end{figure}
\begin{figure}[t]
\centering
\includegraphics[width=.3\textwidth]{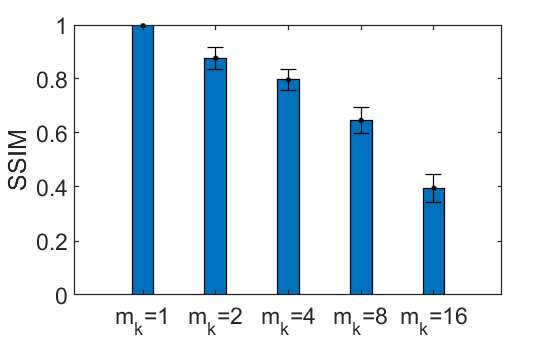}
 \centering
 \mbox{}\vspace*{-.8\baselineskip}
\caption{Image quality comparison of the reconstructed inputs via inverting gradients using structural similarity index measure (SSIM) of different sizes of honest component $m_k=1,~2,~4,~8,~16$.}
\label{fig:ssim}
\vskip -10pt
\end{figure}

\begin{figure}[t]
\centering
\includegraphics[width=.40\textwidth]{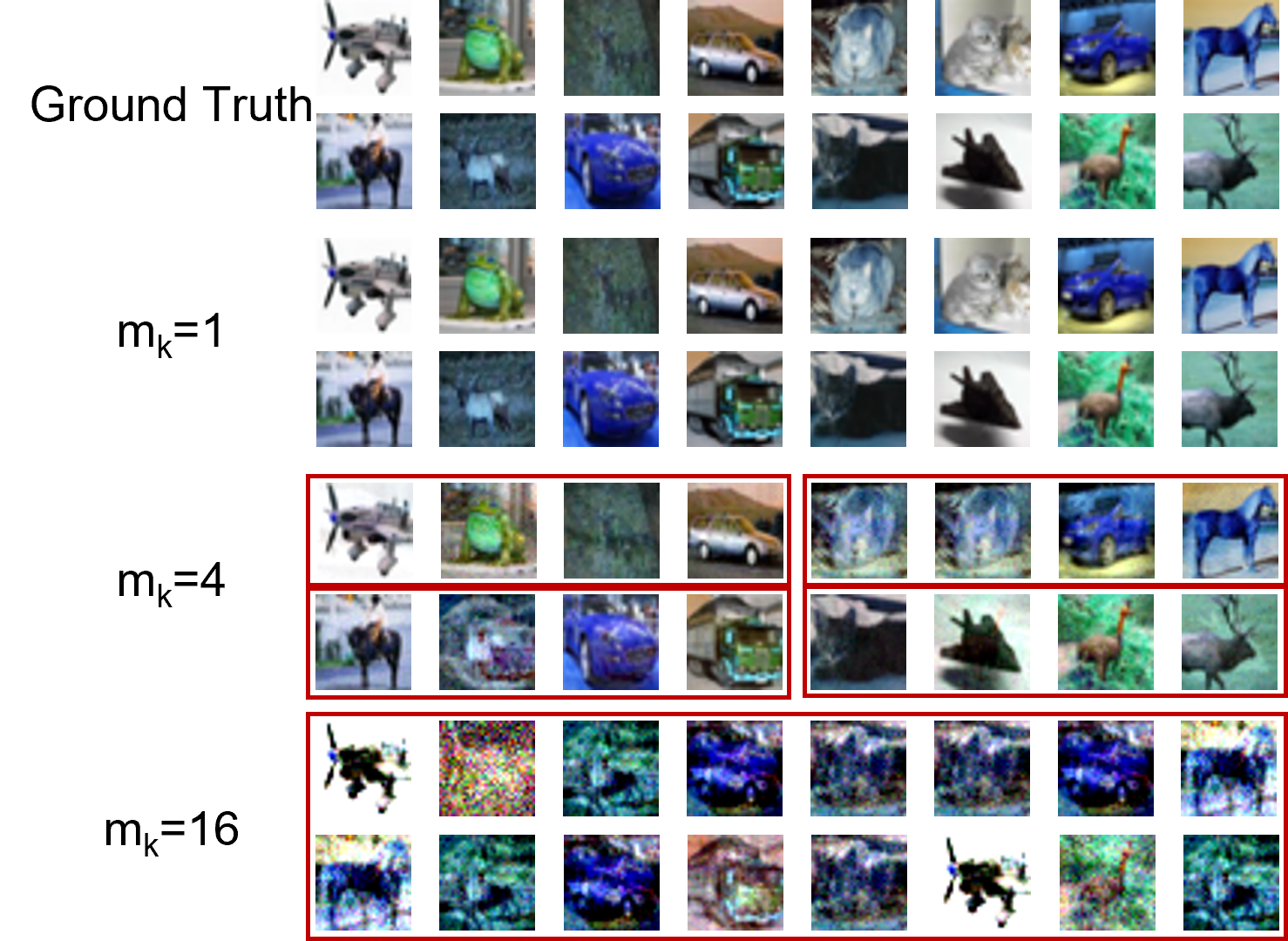}
 \centering
 \mbox{}\vspace*{-.8\baselineskip}
\caption{Samples images of input reconstruction using gradient inversion attack under three different sizes of honest component $m_k=1,~4,~16$, wherein the red box indicates that the corresponding samples are from the same component.}
\label{fig:gia}
\vskip -10pt
\end{figure}

\subsection{Membership inference attack} 
 With the gradient information in  \eqref{eq.gsum} we deploy the recent attack proposed in \cite{li2023effective} as it is specifically designed to infer  membership information from gradients.
We first examine the relationship between privacy loss and the type of graph topology. We randomly generated Poisson and power law networks with 10 nodes and 15 edges (see Fig \ref{fig:example} for examples).   We employed the CIFAR-10 dataset, dividing it into 10 nodes, each containing 4000 data samples, and trained it using the AlexNet model. The test performance of this decentralized FL protocol is similar to the result reported in \cite{li2023effective}. In Fig.~\ref{fig:mia} we demonstrate the attack success rate as a function of the fraction of corrupt nodes for both topologies, where the results are averaged over 5000 Monte Carlo runs.  Note that the adversary is strategic and has prior knowledge about the graph topology, thus it will corrupt the pivotal nodes first. To simulate such scenario,  we assume that nodes are sequentially removed in descending order based on their degrees.   Clearly, while in both cases the membership privacy leakage increases as the number of corrupt nodes increases, the power law networks reveal more privacy compared to Poisson networks. Hence, power law networks are more vulnerable to targeted attacks and thus pose a higher privacy risk compared to Poisson networks. 

\subsection{Gradient inversion attack}
We now further investigate the size of the honest component and  privacy leakage, we deploy the existing gradient inversion attack \cite{zhu2019deep} to invert input samples from the observed sum of local gradients in each honest component. We simulated a randomly connected network with $n=50$ nodes and the batchsize for generating the local gradient is set to one.  In Fig.~\ref{fig:ssim} shows the quality of the reconstructed images for different sizes of honest components: $m_k=1,2,4,8,16$. As expected, the reconstruction quality degrades with increasing size of the honest components.  To visualize the results, in Fig.~\ref{fig:gia} we further demonstrate some examples of the reconstructed images (due to space limit only three cases $m_k = 1, 4, 16$ are shown). Hence, we can see that as the size of honest component $m_k$ increases, there is an obvious degradation in the reconstruction quality. Especially for the case that there are images in the same component with the same label, e.g., the two cats in the top right of the case $m_k=4$, the reconstruction performance is poor. This is consistent with the common observation that gradient inversion attack does not perform well in the case of repeated labels \cite{geng2023improved}.  Overall, we conclude that the bigger the size of honest component, the less accurate will the reconstructed input be.

\section{Conclusion}
In this paper we emphasized the pivotal role of graph topology in the privacy of Decentralized FL. By establishing a delicate privacy bound for model aggregation, we unveiled the tight relationship between network structure and privacy risks. Through our exploration of Poisson and power law networks, we determined that certain topologies inherently offer better privacy guarantees when dealing with attacks.  Our findings, validated by practical experiments including membership inference and gradient inversion, lay a foundation for advancements in the privacy of decentralized FL systems.

\newpage
\bibliographystyle{IEEEbib}
\bibliography{dualpath}

\end{document}